\documentclass[conference]{IEEEtran}
\IEEEoverridecommandlockouts
\usepackage{multicol}
\usepackage{cite}
\usepackage{amsmath}
\usepackage{amsfonts}
\usepackage{pifont}
\usepackage{amssymb}
\usepackage{amsthm}
\usepackage{tikz}
\usepackage{cases}
\usepackage{graphicx}
\usepackage{float,stfloats}
    \graphicspath{{../}}
    \DeclareGraphicsExtensions{.pdf}
\usepackage[justification=centering]{caption}
\usepackage[caption=false,font=footnotesize]{subfig}
\usepackage{multirow}
\usepackage{booktabs}
\usepackage{url}
\usepackage{xtab}
\usepackage{tabu}
\usepackage{longtable}
\usepackage{algorithm}
\usepackage{algorithmic}
\usepackage{enumerate}
\usepackage{makecell}
\usepackage{lipsum}
\usepackage{multicol}
\usepackage{mathdots}
\usepackage{extarrows}
\usepackage{color,xcolor}
\usepackage{bm}
\usetikzlibrary{arrows}
\usepackage{caption}
\captionsetup[figure]{labelformat=simple, labelsep=period}
\captionsetup[table]{labelformat=simple, labelsep=period}
\theoremstyle{plain}

\newtheorem{theorem}{Theorem}

\theoremstyle{definition}

\newtheorem{remark}{Remark}


\def\BibTeX{{\rm B\kern-.05em{\sc i\kern-.025em b}\kern-.08em
    T\kern-.1667em\lower.7ex\hbox{E}\kern-.125emX}}

\begin{document}
\title{Explicit Construction of Minimum Storage Rack-Aware Regenerating Codes for All Parameters \thanks{This work was supported in part by the National Key R\&D Program of China  (No. 2020YFA0712300) and NSFC (No. 61872353).}}

\author{\IEEEauthorblockN{Liyang Zhou, Zhifang Zhang}
\IEEEauthorblockA{\fontsize{9.8}{12}\selectfont KLMM, Academy of Mathematics and Systems Science, Chinese Academy of Sciences, Beijing 100190, China\\
School of Mathematical Sciences, University of Chinese Academy of Sciences, Beijing 100049, China\\
Emails: zhouliyang17@mails.ucas.ac.cn,~~zfz@amss.ac.cn}
}
\maketitle

\thispagestyle{empty}

\begin{abstract}
We consider the rack-aware storage system where \(n\!=\!\bar{n}u\) nodes are organized in \(\bar{n}\) racks each containing \(u\) nodes, and any \(k\!=\!\bar{k}u\!+\!u_0~(0\!\leq\! u_0\!<\!u)\) nodes can retrieve the original data file. More importantly, the cross-rack communication cost is
much more expensive than the intra-rack communication cost, so that the latter is usually neglected in the system bandwidth. The MSRR (minimum storage rack-aware regenerating) code is an important variation of regenerating codes that achieves the optimal repair bandwidth for single node failures in the rack-aware model. However, explicit construction of MSRR codes for all parameters were not developed until Chen\&Barg's work. In this paper
we present another explicit construction of MSRR codes for all parameters that improve Chen\&Barg's construction in two aspects: (1) The sub-packetization is reduced from \((\bar{d}-\bar{k}+1)^{\bar{n}}\) to \((\bar{d}-\bar{k}+1)^{\lceil\frac{\bar{n}}{u-u_{0}}\rceil}\) where $\bar{d}$ is the number of helper racks that participate in the repair process;
(2) The field size is reduced to \(|F|\!>\!n\) which is almost half of the field used in Chen\&Barg's construction.
Besides, our code keeps the same access level as Chen\&Barg's low-access construction.
\end{abstract}

\begin{IEEEkeywords}
Regenerating code, rack-aware storage, optimal repair, sub-packetization.
\end{IEEEkeywords}
\section{Introduction}\label{sec0}
\IEEEPARstart{I}{n} large-scale distributed storage systems,
node failures occasionally happen. A self-sustaining system should be able to recover the data stored in failed nodes by downloading data from surviving nodes. An important metric of repair efficiency is the repair bandwidth, i.e., the total amount of data transmitted during the repair process. Regenerating codes are a kind of erasure codes used in distributed storage systems that can optimize the repair bandwidth for given storage overhead \cite{Dimakis2011}. Particularly, the ones with the minimum storage, i.e., MSR codes, are
appealing in practice in spite of their intricate constructions \cite{Kumar2011,Sasidharan2015,Rawat2016,Ye2016}. The main reason that MSR codes can achieve the optimal repair bandwidth is dividing the data stored in each node into sub-packets of which only a fraction is downloaded from each helper node for repair. The number of sub-packets stored in each node is termed the {\it sub-packetization}. It has been proved that exponential sub-packetization is necessary for MSR codes \cite{SubPBoundSTOC19}. Since the sub-packetization level is closely related to the implementation complexity of the underlying codes, reducing the sub-packetization is significant in practice. Another metric of repair efficiency is the volume of accessed data at the helper nodes which characterizes the disk I/O cost. MSR codes with both the optimal-access property and near optimal sub-packetization were built in \cite{Ye2016sub-}.

The MSR code applies to a homogeneous distributed storage model where all nodes as well as communication between them  are treated indifferently. However,
modern data centers often have hierarchical topologies by organizing nodes in racks, where the cross-rack communication cost is much more expensive than the intra-rack communication cost. This motivates a number of studies that address the repair problem for hierarchical data centers. In this work, we focus on the rack-aware storage model defined as follows.
\begin{table*}[htbp]
\renewcommand\arraystretch{1.75}
\begin{center}
\begin{tabular}{|c|c|c|c|c|}
	\hline  & sub-packetization $\alpha$& access per rack & $\bar{d}$ &field size $|F|$ \\
		\hline Z. Chen et al. \cite{Chen}& $\bar{s}^{\bar{n}} $& $u\cdot\bar{s}^{\bar{n}-1}$& $\bar{k}\leq\bar{d}\leq\bar{n}-1$ &$n|(|F|-1)$ and $|F|\geq n+\bar{s}-1$\\
	\hline H. Hou et al. \cite{Hou2020}& $\bar{s}^{\lceil\bar{n}/\bar{s}\rceil}$ & $\bar{s}^{\lceil\bar{n}/{\bar{s}}\rceil-1}+(u-1)\cdot\bar{s}^{\lceil\bar{n}/{\bar{s}}\rceil}$& $\bar{d}=\bar{n}-1$ & $|F|>k\alpha\sum_{i=1}^{\min\{k,\bar{n}\}}\binom{n-\bar{n}}{k-i}\binom{\bar{n}}{i}$\\
	\hline
This paper & $\bar{s}^{\lceil\frac{\bar{n}}{u-u_{0}}\rceil}$&$u\cdot\bar{s}^{\lceil\frac{\bar{n}}{u-u_{0}}\rceil-1}$ & $\bar{k}\leq\bar{d}\leq\bar{n}-1$ & $ u|(|F|\!-\!1)$ and $|F|\!>\!n$ \\
  \hline
\end{tabular}
\end{center}
\caption{\scriptsize Comparisons with existing constructions of $(n=\bar{n}u,k=\bar{k}u+u_{0})$ MSRR codes where $\bar{s}=\bar{d}-\bar{k}+1$.}\label{t0}
\end{table*}
Suppose $n=\bar{n}u$ and the $n$ nodes are organized in $\bar{n}$ racks each containing $u$ nodes.
A data file consisting of $B$ symbols is stored across the $n$ nodes each storing $\alpha$ symbols such that any $k=\bar{k}u+u_{0}$ ($0\!\leq\! u_{0}\! < \!u$) nodes can retrieve the data file. To rule out the trivial case, we assume throughout that $k\geq u$ \footnote{When $k<u$, a single node erasure can be trivially recovered by the $u-1$ surviving nodes within the same rack because they are sufficient to retrieve the data file.}. Suppose a node fails. The repair process is to generate a replacement node that stores exactly the data of the failed node. The rack that contains the failed node is called the {\it host rack}. The repair is based on the two kinds of communication below:
\begin{enumerate}
\item \textbf{Intra-rack transmission.}
All surviving nodes in the host rack transmit information to the replacement node.
\item \textbf{Cross-rack transmission.}
Outside the host rack, $\bar{d}$ helper racks each transmit $\beta$ symbols to the replacement node.
\end{enumerate}

\noindent Since the cost of intra-rack communication is negligible compared with that of the cross-rack communication, the nodes within each rack can communicate
freely without taxing the system bandwidth. Consequently, the $\beta$ symbols provided by each helper rack are computed from the data stored in all nodes in that helper rack, and the repair bandwidth $\gamma$ only dependents on the cross-rack transmission, i.e., $\gamma=\bar{d}\beta$.

This rack-aware storage model was introduced in \cite{Hu}\cite{Hou}. Moreover, the authors of \cite{Hou} derived a tradeoff between the repair bandwidth and storage overhead for $\bar{k}\leq \bar{d}\leq \bar{n}-1$. The codes with parameters lying on the tradeoff curve are called rack-aware regenerating codes. In particular, the minimum storage rack-aware regenerating (MSRR) code has parameters:
\begin{equation}\label{bw-bound}
\alpha={B}/{k},\ \ \ \beta=\alpha/(\bar{d}-\bar{k}+1)\;.
\end{equation}
Certainly $B,\alpha,\beta$ are all integers and $\alpha$ is called the sub-packetization. On the one hand, codes with small sub-packetization are preferred in practice due to the low complexity in both the encoding and repair process. On the other hand,
$\alpha$ must be large enough to guarantee the existence of MSRR codes for arbitrary $n,k$. It was proved in \cite{Chen} that  $(n\!=\!\bar{n}u,k\!=\!\bar{k}u,\bar{k}\!\leq\! \bar{d}\!\leq\! \bar{n}\!-\!1)$ optimal-access (i.e., the symbols accessed on each helper rack are downloaded without processing) MSRR codes exist only if $\alpha \geq \min\{\bar{s}^{\frac{\bar{n}}{\bar{s}u}},\bar{s}^{\bar{k}-1}\}$, where $\bar{s}\!=\!\bar{d}\!-\!\bar{k}\!+\!1$.

The authors in \cite{Chen} also developed the first explicit constructions of MSRR codes for all admissible parameters, i.e., $n\!=\!\bar{n}u, ~k\!=\!\bar{k}u+u_{0}~(0\!\leq \!u_{0}\!<\!u)$ and $\bar{k}\!\leq\!\bar{d}\!\leq\!\bar{n}\!-\!1$ \footnote{These parameters coincide with the assumptions made when proving the cut-set bound and deriving the MSRR code parameters in \cite{Hou}. Thus in this paper we regard this range as all admissible parameters for MSRR codes. }. However, their codes have sub-packetization $\bar{s}^{\bar{n}}$, higher than the proved lower bound. To our knowledge, no MSRR codes attaining the bounds on sub-packetization have been derived so far, even for the codes without the optimal-access property.

\subsection{Contribution and related work}
In this paper, we present an improved explicit construction of MSRR codes for all admissible parameters. Our code has sub-packetization $\bar{s}^{\lceil\frac{\bar{n}}{u-u_0}\rceil}$, thus taking a step towards shrinking the gap
between realization and proved lower bound. Moreover, we also reduce the field size almost by half. Namely, in \cite{Chen} the codes are built over a finite field $F$ satisfying $n\mid (|F|-1)$ and $|F|>n+\bar{s}-1$, which results in $|F|\geq 2n+1$, while our code needs $u\mid (|F|-1)$ and $|F|>n$ which results in $|F|\approx n$.

In \cite{Hou}, after derivation of the parameters for MSRR codes, the authors also discussed the construction.
They designed specific structure for satisfying the optimal repair while leaving the MDS property to
the Schwartz-Zippel Lemma. As a result, their constructions need some  constraints on the parameters and the finite fields being large enough.

The first explicit constructions of MSRR codes for all admissible parameters were developed in \cite{Chen}. Actually, two constructions were derived where both have the same sub-packetization level but the latter possesses lower access and smaller field size. Thus we only list the parameters of the second construction in \cite{Chen} for comparison in Table \ref{t0}. Note that our code keeps the same access level as their low-access construction, i.e., $\frac{u\alpha}{\bar{s}}$ symbols from each helper rack. Although it is by a factor of $u$ greater than the lower bound proved in \cite{Chen}, it is the lowest access among all existing constructions that are applicable to all admissible parameters.

In a recent work \cite{Hou2020}, Hou et al. present a coding framework for converting any $(\bar{n},\bar{k},\bar{d})$ MSR code into an $(n=\bar{n}u,k=\bar{k}u\!+\!u_0,\bar{d})$ MSRR code with the same sub-packetization. However, for arbitrary $\bar{n}$ and $\bar{k}$ all existing explicit constructions of $(\bar{n},\bar{k},\bar{d})$ MSR codes have sub-packetization $\bar{s}^{\bar{n}}$ except the ones in \cite{Ye2016sub-,Tang} that have sub-packetization $\bar{s}^{\lceil\frac{\bar{n}}{\bar{s}}\rceil}$ but only apply to $\bar{d}\!=\!\bar{n}\!-\!1$. By using the conversion framework, an $(n,k,\bar{d}\!=\!\bar{n}\!-\!1)$ MSRR code is obtained. However, the conversion again relies on the Schwartz-Zippel Lemma, so the MSRR code exists provided the finite field is sufficiently large. Comparisons between our MSRR code and previous constructions are shown in Table \ref{t0}.

The remaining of the paper is organized as follows. Section II describes a repair framework for MSRR codes that is used in both Chen\&Barg's codes and the code in this work.  Then Section III presents the explicit construction of MSRR codes.  Section IV concludes the paper.

\section{A Repair Framework for MSRR Codes}
First introduce some notations. For integers $0\leq m<n$, let $[n]=\{1,...,n\}$ and $[m,n]=\{m,m\!+\!1,...,n\}$. We label the racks from $0$ to $\bar{n}-1$ and the nodes within each rack from $0$ to $u-1$. Moreover, we represent each of the $n=\bar{n}u$ nodes by a pair $(e,g)\in[0,\bar{n}\!-\!1]\times [0,u\!-\!1]$ where $e$ is the rack index and $g$ is the node index within the rack.

In this section, we formalize the construction of MSRR codes from the parity check equations. Denote $r\!=\!n\!-\!k$ and $\bar{r}\!=\!\bar{n}\!-\!\bar{k}$ throughout the paper. Since the MSRR code is first an $(n,k;\alpha)$ MDS array code, the code can be defined by the following parity check equations.
\begin{equation}\label{PCE}
  \textstyle{\sum_{e=0}^{\bar{n}-1}\sum_{g=0}^{u-1}H_{(e,g)}{\bm c}_{(e,g)}^\tau}={\bm 0}\;,
\end{equation}
where $H_{(e,g)}$ is a $r\alpha\!\times\! \alpha$ matrix over a finite field $F$ and ${\bm c}_{(e,g)}\!=\!(c_{(e,g),0},...,c_{(e,g),\alpha-1})\!\in\! F^{\alpha}$ denotes the vector stored in node $(e,g)$. The MDS property means any $k$ out of the ${\bm c}_{(e,g)}$'s can recover all other $r$ vectors, which is equivalent to require the concatenation of any $r$ distinct $H_{(e,g)}$'s results in a $r\alpha\times r\alpha$ invertible matrix.

Besides, the MSRR codes should satisfy the optimal repair property. That is, each vector ${\bm c}_{(e^*,g^*)}\in F^{\alpha}$ can be recovered from
$\{{\bm c}_{(e^*,g)}\mid g\in[0, u-1], g\neq g^*\}\cup\{{\bm s}_e\mid e\in\mathcal{H}\}$
for any $\mathcal{H}\!\subseteq\! [0,\bar{n}\!-\!1]\!-\!\{e^*\}$ with $|\mathcal{H}|\!=\!\bar{d}$, where ${\bm s}_e\!\in F^{{\alpha}/({\bar{d}-\bar{k}+1})}$ is computed from $\{{\bm c}_{(e,g)}\mid g\in[0,u-1]\}$. The next theorem gives a sufficient condition for the optimal repair property.

\begin{theorem}\label{thm2}
Suppose $\mathcal{C}$ is an $(n,k;\alpha)$ array code defined by the parity check equations  in \eqref{PCE}. Denote $\beta=\alpha/(\bar{d}-\bar{k}+1)$. Then $\mathcal{C}$ satisfies the optimal repair property if for any $e^*\in[0,\bar{n}-1]$, there exists a matrix $S_{e^*}\in F^{\bar{r}\beta\times r\alpha}$ such that
\begin{itemize}
\item[(a)]For $g\!\in\![0,u\!-\!1]$, $S_{e^*}H_{(e^*,g)}\!=\!P_{e^*}Q_{(e^*,g)}$, where $Q_{(e^*,g)}$ is an $\alpha\times \alpha$ invertible matrix and $P_{e^*}\in F^{\bar{r}\beta\times \alpha}$;
\item[(b)]For all $e\!\neq\! e^*$ and $g\!\in\![0,u\!-\!1]$, $S_{e^*}H_{(e,g)}\!=\!P_eR_eQ_{(e,g)}$, where $P_e\!\in\! F^{\bar{r}\beta\times \beta}, R_e\!\in \!F^{\beta\times \alpha}, Q_{(e,g)}\!\in\! F^{\alpha\times\alpha}$.
\item[(c)]For any $\{e_1,...,e_{\bar{n}-\bar{d}-1}\}\!\in\! [0,\bar{n}\!-\!1]\!-\!\{e^*\}$, the matrix $\begin{pmatrix}P_{e^*}&P_{e_1}&\cdots&P_{e_{\bar{n}-\bar{d}-1}}\end{pmatrix}\in F^{\bar{r}\beta\times\bar{r}\beta}$ is invertible.
\end{itemize}
\end{theorem}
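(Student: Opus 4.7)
The proof is a direct manipulation of the parity check equation after left-multiplying by $S_{e^*}$, with the three conditions tailored precisely to make the argument go through.

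My plan is as follows. Start from \eqref{PCE} and left-multiply both sides by $S_{e^*}$ to obtain
\begin{equation*}
\textstyle\sum_{g=0}^{u-1} S_{e^*} H_{(e^*,g)}\, \bm{c}_{(e^*,g)}^{\tau} \;+\; \sum_{e\neq e^*}\sum_{g=0}^{u-1} S_{e^*} H_{(e,g)}\, \bm{c}_{(e,g)}^{\tau} \;=\; \bm{0}\,.
\end{equation*}
By condition (a) the first sum becomes $P_{e^*}\sum_{g} Q_{(e^*,g)}\bm{c}_{(e^*,g)}^{\tau}$, which is the part that contains the unknown node $(e^*,g^*)$ along with the surviving nodes in the host rack. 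By condition (b), for each $e\neq e^*$ we have $S_{e^*}H_{(e,g)}=P_eR_eQ_{(e,g)}$ where $P_e$ and $R_e$ are independent of $g$; hence
\begin{equation*}
\textstyle\sum_{g=0}^{u-1} S_{e^*} H_{(e,g)}\, \bm{c}_{(e,g)}^{\tau} \;=\; P_e\bigl(R_e \sum_{g=0}^{u-1} Q_{(e,g)}\bm{c}_{(e,g)}^{\tau}\bigr) \;=\; P_e\,\bm{s}_e^{\tau}\,,
\end{equation*}
where $\bm{s}_e^{\tau}:=R_e\sum_{g} Q_{(e,g)}\bm{c}_{(e,g)}^{\tau}\in F^{\beta}$ is computable from the contents of rack $e$ alone via free intra-rack communication, and has exactly the required bandwidth $\beta=\alpha/(\bar{d}-\bar{k}+1)$.

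Next I would move the quantities that are already available at the replacement node to the right-hand side: the surviving host-rack vectors $\{\bm{c}_{(e^*,g)}\}_{g\neq g^*}$ and the $\bar{d}$ helper-rack summaries $\{\bm{s}_e\}_{e\in\mathcal{H}}$. Labelling the non-helper racks as $e_1,\dots,e_{\bar{n}-\bar{d}-1}\in[0,\bar{n}-1]\setminus(\{e^*\}\cup\mathcal{H})$, the equation rearranges to
\begin{equation*}
\begin{pmatrix}P_{e^*}&P_{e_1}&\cdots&P_{e_{\bar{n}-\bar{d}-1}}\end{pmatrix}
\begin{pmatrix}Q_{(e^*,g^*)}\bm{c}_{(e^*,g^*)}^{\tau}\\ \bm{s}_{e_1}^{\tau}\\ \vdots\\ \bm{s}_{e_{\bar{n}-\bar{d}-1}}^{\tau}\end{pmatrix}
\;=\; \bm{y}\,,
\end{equation*}
where $\bm{y}$ depends only on the information available at the replacement node. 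A dimension check confirms that the block matrix is $\bar r\beta\times\bar r\beta$ since $\alpha+(\bar n-\bar d-1)\beta=\bar s\beta+(\bar n-\bar d-1)\beta=\bar r\beta$, so condition (c) makes it invertible.

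Solving the linear system then yields $Q_{(e^*,g^*)}\bm{c}_{(e^*,g^*)}^{\tau}$, and invertibility of $Q_{(e^*,g^*)}$ from condition (a) finally recovers $\bm{c}_{(e^*,g^*)}^{\tau}$, as required. There is no real obstacle here beyond careful bookkeeping; the content of the theorem is that the three conditions were designed exactly so the parity check equation, after the $S_{e^*}$ projection, splits into a per-rack-summable form (via (a) and (b)) and leaves a solvable square system (via (c)). The mildly delicate point worth flagging in the write-up is that $P_e$ and $R_e$ in (b) may depend on $e^*$, but not on $g$, which is what legitimises defining $\bm{s}_e$ as an intra-rack computation independent of which node of $e^*$ has failed.
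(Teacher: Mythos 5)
Your proposal is correct and follows essentially the same route as the paper: left-multiply \eqref{PCE} by $S_{e^*}$, use (a) and (b) to collapse each helper rack into a $\beta$-dimensional summary $\bm{s}_e^{\tau}=R_e\sum_g Q_{(e,g)}\bm{c}_{(e,g)}^{\tau}$, invoke (c) to solve the resulting $\bar r\beta\times\bar r\beta$ system, and finish with the invertibility of $Q_{(e^*,g^*)}$. The only difference is cosmetic bookkeeping — you move the known host-rack terms to the right-hand side before solving, while the paper solves for the aggregate $\tilde{\bm c}_{e^*}$ first and subtracts them afterwards.
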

\begin{proof}
For any $e^*\!\in\![0,\bar{n}\!-\!1]$, we prove that existence of the matrix $S_{e^*}$ implies the optimal repair of any individual node in rack $e^*$. Actually, multiply $S_{e^*}$ from the left on both sides of \eqref{PCE}, then we have
\begin{equation}\label{eq3}
  P_{e^*}\sum_{g=0}^{u-1}Q_{(e^*,g)}{\bm c}_{(e^*,g)}^\tau+\sum_{e\neq e^*}P_eR_e\sum_{g=0}^{u-1}Q_{(e,g)}{\bm c}_{(e,g)}^\tau=\bm{0}\;.
\end{equation}
Furthermore, for all $e\in[0,\bar{n}-1]$ denote
\begin{equation}\label{eq40}
\tilde{\bm c}_e^\tau=\textstyle{\sum_{g=0}^{u-1}Q_{(e,g)}{\bm c}_{(e,g)}^\tau}\end{equation}
 then \eqref{eq3} becomes
\begin{equation}\label{eq50}
P_{e^*}\tilde{\bm c}_{e^*}^\tau+\textstyle{\sum_{e\neq e^*}P_e(R_e\tilde{\bm c}_e^\tau)}=\bm{0}\;.
\end{equation}
The condition (c) of the hypothesis implies that by downloading the vector ${\bm s}_{e}^\tau=R_{e}\tilde{\bm c}_{e}^\tau$ from the helper rack $e\in [0,\bar{n}-1]-\{e^*,e_1,...,e_{\bar{n}-1-\bar{d}}\}$, one can recover
$\big\{\tilde{\bm c}_{e^*}^\tau\big\}\cup\big\{R_{e_i}\tilde{\bm c}_{e_i}^\tau\!\mid\! i\!\in\![\bar{n}\!-\!\bar{d}\!-\!1]\}$.
Obviously, ${\bm s}_{e}\!\in\! F^\beta$, thus only $\beta$ symbols are downloaded from each helper rack. Moreover, from the condition (a) of the hypothesis one can further derive ${\bm c}_{(e^*,g^*)}$ from $\tilde{\bm c}_{e^*}$ and $\{{\bm c}_{(e^*,g)}\mid g\in[0,u-1], g\neq g^*\}$.
\end{proof}

\begin{remark}\label{rmk1}
 Theorem \ref{thm2} presents a specific but simpler repair framework for MSRR codes. More details are given below.
 \begin{enumerate}
 \item The matrix $S_{e^*}$ actually means selecting $\bar{r}\beta$ parity check equations from (\ref{PCE}) which then define an $(\bar{r}+\bar{d},\bar{d};\beta)$ MDS array code as shown in (\ref{eq50}), where for $e\neq e^*$, $R_e\tilde{\bm c}_e^\tau\in F^\beta$ represents one component of the MDS array codeword, and $\tilde{\bm c}_{e^*}^\tau\!\in\! F^{\alpha}$ represents $\bar{d}\!-\!\bar{k}\!+\!1$ components. The MDS property comes from the condition (c).
 \item The condition (a) and (b) guarantee that after multiplying the matrix $S_{e^*}$ a common divisor $P_e$ can be drawn out for each rack $e$. Therefore, all $u$ nodes in rack $e$ play as a whole (i.e., the $\tilde{\bm c}_e$ defined in (\ref{eq40})) in the repair process.
 \item The matrix $R_e$ means a compression from $\alpha$ symbols to $\beta$ symbols, while for the host rack $e^*$ there is no compression. This guarantees the ratio of downloaded data size to recovered data size.
 \item The condition (a) requires that $Q_{(e^*,g)}$, $g\!\in\![0,u\!-\!1]$, are invertible matrices, which implies the same selection of parity check equations (i.e., $S_{e^*}$) can be used for the repair of any single node failure in rack $e^*$.
 \end{enumerate}
Although Theorem \ref{thm2} proposes a stronger requirement than the optimal repair property, it also simplifies the design of MSRR codes and provides some insights into the constructions of \cite{Chen} and this work.
\end{remark}

\begin{remark}\label{rmk2}
The repair of single node failures in rack $e^*$ uses only part of the $r\alpha$ parity check equations in (\ref{PCE}) which exactly correspond to the nonzero columns of $S_{e^*}$. Divide the $r\alpha$ parity check equations into $r$ blocks each containing $\alpha$ equations. In \cite{Chen} a total of $\bar{r}$ blocks of check equations are used for the repair of single node failures in one rack.  By contrast, we use $\bar{r}(u-u_0)$ blocks of check equations to repair single node failures in $u-u_0$ racks. That is, more parity check equations are used to repair more racks in our construction. As a result, a smaller exponent (i.e., $\lceil\frac{\bar{n}}{u-u_0}\rceil$) in the sub-packetization is enough to ensure the repair of all $\bar{n}$ racks.
\end{remark}

\section{The Explicit Construction}
Suppose $k\!=\!\bar{k}u\!+\!u_{0}\ (0\!\leq \!u_{0}\!<\!u)$ and $\bar{k}\!\leq\!\bar{d}\!\leq\!\bar{n}-1$.
We construct an $(\bar{n}u,k,\bar{d})$ MSRR code $\mathcal{C}$ with sub-packetization $\alpha=\bar{s}^m$, where $\bar{s}=\bar{d}-\bar{k}+1$ and $m=\lceil\frac{\bar{n}}{u-u_{0}}\rceil$.
The code $\mathcal{C}$ is defined by parity check equations as in \eqref{PCE}. First we introduce some notations related to the expression of $H_{(e,g)}$'s.

\begin{itemize}
  \item Divide $H_{(e,g)}$ into $r$ row blocks
$H_{t,(e,g)}$, $t\in[0,r\!-\!1]$,
where $H_{t, (e,g)}\in F^{\alpha\times \alpha}$ is the $(t+1)$-th $\alpha$ rows of $H_{(e,g)}$.
\item Label the rows and columns of $H_{t, (e,g)}$ by the integers in $[0,\alpha-1]$. For any $a,b\in[0,\alpha-1]$, $H_{t,(e,g)}(a,b)$ denotes the $(a,b)$-th entry of $H_{t,(e,g)}$.
\item For each integer $a\!\in\![0,\alpha\!-\!1]$, let $(a_0,...,a_{m-1})$ be its $\bar{s}$-ary expansion, i.e., $a\!=\!\sum_{\tau=0}^{m-1}a_{\tau}\bar{s}^{\tau}$, $a_{\tau}\!\in\![0,\bar{s}\!-\!1]$. For any $v\!\in\![0,\bar{s}\!-\!1]$ and $\tau\!\in\![0,m\!-\!1]$, let $a(\tau,v)$ be the integer that has the $\bar{s}$-ary expansion $(a_{0},...,a_{\tau-1}, v, a_{\tau+1},...,a_{m-1})$.
\item For $e\!\in\![0,\bar{n}\!-\!1]$, define $\pi(e)\!=\!e\!-\!(u-u_0)\lfloor\frac{e}{u-u_0}\rfloor$, i.e., $e\equiv \pi(e)~{\rm mod~}(u-u_0)$.
\end{itemize}

Secondly we choose some specific elements in a finite field $F$, where $u|(|F|-1)$ and $|F|>n$.
\begin{enumerate}
  \item Let $\xi$ be a primitive element of $F$ and $\eta$ be an element of $F$ with multiplicative order $u$.
  \item Denote $\lambda_{(e,g)}=\xi^e\eta^g$ for $e\!\in\![0,\bar{n}\!-\!1],g\!\in\![0,u\!-\!1]$. It can be seen $\lambda_{(e,g)}\!\neq\!\lambda_{(e',g')}$ for $(e,g)\!\neq\! (e',g')\!\in\![0,\bar{n}\!-\!1]\times[0,u\!-\!1]$, because $(\xi^{e-e'})^u\neq 1$ for $e\!\neq\! e'\in[0,\bar{n}-1]$ while $(\eta^{g'\!-g})^u=1$ for all $g,g'\in[0,u-1]$.
  \item Let $\mu_1,\cdots,\mu_{\bar{s}-1}$ be $\bar{s}-1$ distinct nonzero elements in $F$ such that $\{\mu_{1},\cdots,\mu_{\bar{s}-1}\}\cap\{\xi^{eu}:e\in[0,\bar{n}-1]\}=\emptyset$. Note $\bar{s}-1+\bar{n}=\bar{d}-\bar{k}+\bar{n}<2\bar{n}$, so these $\mu_i$'s exist for $u\geq 2$ and $|F|>n$.
\end{enumerate}

Next we give Algorithm \ref{alg1} for defining the $H_{t,(e,g)}$'s. The whole parity check matrix is established by running Algorithm \ref{alg1} for $t\in[0,r-1]$.


\begin{algorithm}[th]
\caption{\\Defining $H_{t,(e,g)}$'s for $e\in[0,\bar{n}-1]$ and $g\in[0,u-1]$.}\label{alg1}
\begin{algorithmic}[1]
\STATE Diagonal: for $a\!\in\![0,\alpha\!-\!1]$,  set $H_{t,(e,g)}(a,a)\!=\!\lambda_{(e,g)}^{t}$;
\STATE Non-diagonal:
\FOR{$e\in[0,\bar{n}-1]$, $g\in[0,u-1]$ and $a\!\in\![0,\alpha\!-\!1]$}
\STATE Initialize $H_{t,(e,g)}(a,b)\!=0$ for all $b\neq a$;
\STATE Denote $\tau=\lfloor\frac{e}{u-u_0}\rfloor$;
~~\IF{$a_{\tau}=0$ and $t\equiv \pi(e)~{\rm mod~}u$}
\STATE Set $H_{\!t,(e,g)\!}(a,b)\!=\!\lambda_{(e,g)}^{\pi(e)}\mu_{v}^{\lfloor\frac{t}{u}\rfloor}$ for $b\!\!=\!\!a(\tau,v),v\!\in\![\bar{s}\!-\!1]$;
\ENDIF
\ENDFOR
\end{algorithmic}
\end{algorithm}
We give some explanations of Algorithm \ref{alg1}. Actually, Line 1 defines the diagonal entries of $H_{t,(e,g)}$'s, Line 4 initializes all non-diagonal entries as zeros, and then Line 6-7 updates the non-diagonal entries in some blocks (i.e., $t\equiv \pi(e)~{\rm mod~}u$), some rows (i.e., $a_{\tau}=0$) and some columns (i.e, $b\!\in\!\{a(\tau,v)\!\mid\! v\!\neq\! 0\}$).
In the following we prove $\mathcal{C}$ is an MSRR code by showing it satisfies the MDS property and optimal repair property.

\begin{remark}\label{re1}
The proofs are derived in an inductive way, which depends on a partition on the coordinates of a vector in $F^\alpha$. In more detail, for each vector in $F^\alpha$, its coordinates are indexed by subscripts ranging in $[0,\alpha\!-\!1]$.
For any $a\!\in\![0,\alpha\!-\!1]$, let $w(a)$ be the number of digits that equal $0$ in $a$'s $\bar{s}$-ary expansion $(a_{0},...,a_{m-1})$. Denote $\mathcal{L}_{\sigma}\!=\!\{a\!\in\![0,\alpha\!-\!1]\!\mid\!\omega(a)\!=\!\sigma\}$. Obviously, $\cup_{\sigma=0}^m\mathcal{L}_{\sigma}$ forms a partition of the set $[0,\alpha\!-\!1]$. We prove the two properties of $\mathcal{C}$ by induction on $\sigma$.
\end{remark}

\subsection{Proof of the MDS property}
\begin{theorem}\label{thm3}
The code $\mathcal{C}$ satisfies the MDS property, i.e., for any $r$ nodes $(e_1,g_1),...,(e_r,g_r)\in[0,\bar{n}-1]\times[0,u-1]$, the matrix $H=(H_{(e_1,g_1)}~H_{(e_2,g_2)}~\cdots~H_{(e_r,g_r)})$ is invertible.
\end{theorem}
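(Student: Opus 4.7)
\smallskip\noindent\textbf{Proof proposal.} The plan is to show directly that the only vector $x=(x_1,\ldots,x_r)^T\in F^{r\alpha}$ with $x_i\in F^\alpha$ satisfying $Hx=\bm 0$ is the zero vector, by induction on the weight $\sigma=\omega(a)\in\{0,1,\ldots,m\}$ of the row index $a$, where $\omega(a)$ counts the zero digits in the $\bar{s}$-ary expansion of $a$ as in Remark~\ref{re1}. The goal of the induction is the statement: for every $\sigma$, $x_{i,a}=0$ for all $i\in[r]$ and all $a\in\mathcal{L}_\sigma$.

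First I would unpack what the equation at the row indexed by $(t,a)$ (block $t\in[0,r-1]$, coordinate $a\in[0,\alpha-1]$) looks like. From Algorithm~\ref{alg1}, the only nonzero entries in row $a$ of $H_{t,(e_i,g_i)}$ are the diagonal entry $\lambda_{(e_i,g_i)}^t$ at column $a$ and, provided $a_{\tau_i}=0$ (with $\tau_i=\lfloor e_i/(u-u_0)\rfloor$) and $t\equiv\pi(e_i)\bmod u$, the entries $\lambda_{(e_i,g_i)}^{\pi(e_i)}\mu_v^{\lfloor t/u\rfloor}$ at columns $a(\tau_i,v)$, $v\in[\bar{s}-1]$. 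The row equation therefore reads
\begin{equation*}
\sum_{i=1}^{r}\lambda_{(e_i,g_i)}^{t}x_{i,a}+\!\!\!\sum_{\substack{i:\, a_{\tau_i}=0\\ t\equiv\pi(e_i)\bmod u}}\!\!\!\lambda_{(e_i,g_i)}^{\pi(e_i)}\sum_{v=1}^{\bar{s}-1}\mu_v^{\lfloor t/u\rfloor}x_{i,a(\tau_i,v)}=0.
\end{equation*}
The crucial observation is that whenever the non-diagonal block contributes, $a_{\tau_i}=0$ and $v\neq 0$, so the column index $a(\tau_i,v)$ has one fewer zero digit than $a$; equivalently $\omega(a(\tau_i,v))=\omega(a)-1$.

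For the base case $\sigma=0$, no digit of $a$ is zero so the non-diagonal sum is empty, and the $r$ equations obtained as $t$ ranges over $[0,r-1]$ form a Vandermonde system
$\sum_{i=1}^{r}\lambda_{(e_i,g_i)}^{t}x_{i,a}=0$. The distinctness of the $\lambda_{(e,g)}=\xi^{e}\eta^{g}$ over $[0,\bar{n}-1]\times[0,u-1]$ (which was noted in the construction using $u\mid(|F|-1)$ and $|F|>n$) gives a Vandermonde matrix of full rank $r$, forcing $x_{i,a}=0$ for all $i$. For the inductive step, assuming $x_{i,b}=0$ whenever $\omega(b)<\sigma$, the second sum vanishes for every $a\in\mathcal{L}_\sigma$, reducing each of the row equations to the same Vandermonde system as before; again all $x_{i,a}=0$.

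I anticipate the main obstacle to be purely book-keeping: verifying cleanly that each non-diagonal contribution in row $a$ only couples to coordinates $x_{i,b}$ with strictly smaller $\omega(b)$ (so the induction is on the correct, well-founded parameter), and that the resulting reduced system, once the lower-weight terms are peeled off, is genuinely an $r\times r$ Vandermonde matrix in $r$ pairwise distinct elements. Modulo this careful indexing, the argument is just Vandermonde plus the triangular decoupling afforded by $\omega$, and does not require any further constraint on the field beyond $|F|>n$ and $u\mid(|F|-1)$ already imposed in the construction.
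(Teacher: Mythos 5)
Your proposal is correct and follows essentially the same route as the paper's own proof: induction on the number of zero digits $\omega(a)$, with the base case and the reduced systems at each step handled by the Vandermonde argument in the distinct $\lambda_{(e,g)}$'s, and the key observation that every non-diagonal entry in row $a$ couples only to columns $a(\tau_i,v)$ with $\omega(a(\tau_i,v))=\omega(a)-1$. The only cosmetic difference is that you phrase the hypothesis as strong induction on all $b$ with $\omega(b)<\sigma$, whereas the paper steps from $\mathcal{L}_\sigma$ to $\mathcal{L}_{\sigma+1}$; the content is identical.
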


\begin{proof}
It suffices to show for any $\bm x\!\in\!(F^{\alpha})^r$, $H{\bm x} ^{\tau}\!=\bm 0$ always implies $\bm x\!=\!\bm 0$. Denote $\bm x\!=\!(\bm x_1,...,\bm x_r)$ and $\bm x_{i}\!=\!(x_{i,0},x_{i,1},...,x_{i,\alpha-1})\in F^{\alpha}$ for $i\!\in\! [r]$. Using the partition defined in Remark \ref{re1}, next we prove $\bm x=0$ by showing $\{x_{i,\mathcal{L}_{\sigma}}\!\mid \!i\!\in\![r]\}$ contains only zeros for all $\sigma\in [0,m]$. This is accomplished by induction on $\sigma$.

For simplicity, denote $H_t=(H_{t,(e_1,g_1)}~\cdots~H_{t,(e_r,g_r)})$
for $t\in[0,r-1]$. Then the linear system $H{\bm x} ^{\tau}\!=\bm 0$ becomes
\begin{equation}\label{eq4}
H_t{\bm x}^\tau=\textstyle{\sum_{i=1}^rH_{t,(e_i,g_i)}{\bm x}_i^\tau}={\bm 0},~\forall t\in[0,r-1]\;.
\end{equation}

First consider the base case $\sigma=0$.
For any $a\in \mathcal{L}_{0}$, by the definition of $H_{t,(e,g)}$ in Algorithm \ref{alg1} we know the $a$-th row of $H_{t,(e,g)}$ are all zeros except the $(a,a)$-th entry.
Choose the $a$-th rows in the linear system \eqref{eq4}, one can obtain the following linear system
\begin{equation}\label{B-1}
\textstyle{\sum_{i=1}^r\lambda_{(e_i,g_i)}^{t}x_{i,a}}=0,~\forall t\in[0,r-1].
\end{equation}
Since $\lambda_{(e_1,g_1)},...,\lambda_{(e_r,g_r)}$ are distinct elements in $F$, it immediately follows $x_{1,a}=\cdots=x_{r,a}=0$. Thus $\{x_{i,\mathcal{L}_{0}}\mid i\in[r]\}$ contains only zeros.
	
Now suppose it has been proved $\{x_{i,\mathcal{L}_{\sigma}}\!\mid\! i\in[r]\}$ contains only zeros for some $\sigma\geq 0$. Then for any $a\in \mathcal{L}_{\sigma+1}$, the $a$-th rows in \eqref{eq4} are
\begin{equation}\label{B-2}
\sum_{i=1}^r\lambda_{(e_i,g_i)}^{t}x_{i,a}+\sum_{i=1}^r\sum_{v=1}^{\bar{s}-1}f_{t}(a,e_i)
\lambda_{(e_i,g_i)}^{\pi(e_i)}\mu_{v}^{\lfloor\frac{t}{u}\rfloor}x_{i,a(\lfloor\frac{e_i}{u-u_0}\rfloor,v)}=0,
\end{equation}
where
$$
f_{t}(a,e_i)=\begin{cases}
1\ \ \ \ \ \ \ \mathrm{if} \ a_{\lfloor\frac{e_i}{u-u_0}\rfloor}=0\mathrm{~and~} t\equiv \pi(e_i)~{\rm mod~}u\\
0\ \ \ \ \ \ \ \mathrm{otherwise}.
\end{cases}
$$
However, for the parameters $t,e_i$ such that $f_{t}(a,e_i)\!\neq\!0$, it must have $a(\lfloor\frac{e_i}{u-u_0}\rfloor,v)\!\in\!\mathcal{L}_{\sigma}$ for $v\!\in\![\bar{s}\!-\!1]$, and then $x_{i,a(\lfloor\frac{e_i}{u-u_0}\rfloor,v)}=0$ by the induction hypothesis. As a result, \eqref{B-2} becomes $\sum_{i=1}^r\!\lambda_{(e_i,g_i)}^{t}x_{i,a}\!=\!0$ for $t\!\in\![0,r\!-\!1]$. Similar to \eqref{B-1}, it follows $x_{1,a}=\cdots=x_{r,a}=0$. Thus $\{x_{i,\mathcal{L}_{\sigma+1}}\!\mid\!i\in[r]\}$ contains only zeros. Therefore, the inductive proof is finished.
\end{proof}

\subsection{Proof of the repair property}
\begin{theorem}\label{thm4}
The code $\mathcal{C}$ satisfies the optimal repair property, i.e., for any node $(e^*,g^*)$ and any $\mathcal{H}\!\subseteq\! [0,\bar{n}\!-\!1]\!-\!\{e^*\}$ with $|\mathcal{H}|\!=\!\bar{d}$, the vector ${\bm c}_{(e^*,g^*)}$ can be recovered from
$$\{{\bm c}_{(e^*,g)}\mid g\in[0,u-1],g\neq g^*\}\cup\{{\bm s}_e\mid e\in\mathcal{H}\}$$
where ${\bm s}_e\!\in F^{\beta}$ is computed from $\{{\bm c}_{(e,g)}\mid g\in[0,u-1]\}$.
\end{theorem}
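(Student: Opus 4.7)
The plan is to instantiate the template of Theorem \ref{thm2}. Let $\tau^* = \lfloor e^*/(u-u_0)\rfloor$ and $\pi^* = \pi(e^*)$, choose $Q_{(e,g)} = \lambda_{(e,g)}^{\pi^*} I_\alpha$ for every rack $e$, and let $S_{e^*}$ be the projection onto the $\bar{r}\beta$ rows $(t,a)$ with $t = \pi^* + ju$, $j \in [0,\bar{r}-1]$, and $a_{\tau^*}=0$. With $\tilde{c}_{e,b} := \sum_{g=0}^{u-1} \lambda_{(e,g)}^{\pi^*} c_{(e,g),b}$, each helper rack $e \in \mathcal{H}$ transmits $\bm{s}_e = \{\tilde{c}_{e,b} : b_{\tau^*}=0\} \in F^\beta$. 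Once the full vector $\tilde{\bm{c}}_{e^*}$ has been recovered, $\bm{c}_{(e^*,g^*)}$ is extracted as in the proof of Theorem \ref{thm2}, using the invertibility of $Q_{(e^*,g^*)}=\lambda_{(e^*,g^*)}^{\pi^*}I_\alpha$ and the surviving intra-rack vectors. The task therefore reduces to showing that the $\bar{r}\beta$ selected equations determine $\tilde{\bm{c}}_{e^*}$.

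Using Algorithm \ref{alg1} together with $\eta^u=1$ (hence $\lambda_{(e,g)}^{\pi^*+ju} = \lambda_{(e,g)}^{\pi^*}\xi^{euj}$), the $(t=\pi^*+ju,\,a)$-th parity check equation collapses, after summing over $g$ within each rack, to
\[
\sum_{e=0}^{\bar{n}-1} \xi^{euj}\,\tilde{c}_{e,a}\;+\!\!\sum_{\substack{e\,:\,\pi(e)=\pi^* \\ a_{\tau_e}=0}}\sum_{v=1}^{\bar{s}-1} \mu_v^{\,j}\, \tilde{c}_{e,\,a(\tau_e,v)}\;=\;0,
\]
where $\tau_e := \lfloor e/(u-u_0)\rfloor$; note that $e=e^*$ satisfies the constraints of the second sum ($\pi(e^*)=\pi^*$ and $a_{\tau_{e^*}}=a_{\tau^*}=0$), so it contributes to both sums. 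I would then solve by induction on $\omega'(a)$, defined as the number of $0$-digits in the $\bar{s}$-ary expansion of $a$ at positions $\tau'\neq \tau^*$, in the spirit of Theorem \ref{thm3}. The crucial observation is that whenever the second sum is activated by some $e\neq e^*$, necessarily $\tau_e\neq\tau^*$ (since two distinct racks in one $(u-u_0)$-group cannot share the same $\pi$), and then $a_{\tau_e}=0$ combined with $v\neq 0$ forces $\omega'(a(\tau_e,v))=\omega'(a)-1$, so those terms lie at a strictly lower level.

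At level $\sigma$, after transferring every $\omega'$-lower term and every helper-rack contribution to the right-hand side (known by the induction hypothesis and by the helpers' downloads, respectively), the equations at a fixed $a$ with $\omega'(a)=\sigma$ become
\[
\sum_{v=0}^{\bar{s}-1} \mu_v^{\,j}\, \tilde{c}_{e^*,a(\tau^*,v)}\;+\!\sum_{e\in\mathcal{E}} \xi^{euj}\,\tilde{c}_{e,a}\;=\;\text{known},\quad j\in[0,\bar{r}-1],
\]
where $\mathcal{E}:=[0,\bar{n}-1]\setminus(\mathcal{H}\cup\{e^*\})$ and $\mu_0:=\xi^{e^*u}$. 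This is a square Vandermonde system of order $\bar{r}=\bar{s}+(\bar{n}-\bar{d}-1)$ in the nodes $\xi^{e^*u},\mu_1,\ldots,\mu_{\bar{s}-1},\{\xi^{eu}\}_{e\in\mathcal{E}}$, which are pairwise distinct because $\xi^u$ has order $(|F|-1)/u\geq\bar{n}$ (using $u\mid(|F|-1)$ and $|F|>n$) and the $\mu_v$'s are disjoint from $\{\xi^{eu}:e\in[0,\bar{n}-1]\}$ by construction. Inverting it yields all $\tilde{c}_{e^*,a(\tau^*,v)}$ and the required $\tilde{c}_{e,a}$'s for $e\in\mathcal{E}$, advancing the induction. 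The principal obstacle I anticipate is exactly this bookkeeping: one must check that every non-diagonal contribution produced by a rack $e\neq e^*$ with $\pi(e)=\pi^*$ strictly decreases $\omega'$, which is precisely what the condition $t\equiv\pi(e)\,\bmod u$ hard-wired in Algorithm \ref{alg1} is designed to guarantee, and which is what enables the sub-packetization to be reduced from $\bar{s}^{\bar{n}}$ to $\bar{s}^{\lceil\bar{n}/(u-u_0)\rceil}$.
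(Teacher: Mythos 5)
Your proposal is correct and follows essentially the same route as the paper's proof: the same selection of parity checks ($t\equiv\pi(e^*)\bmod u$ and rows with $a_{\tau^*}=0$), the same rack aggregation $\tilde{\bm c}_e=\sum_g\lambda_{(e,g)}^{\pi(e^*)}{\bm c}_{(e,g)}$, and the same induction on the number of zero digits (your $\omega'$ is the paper's $\sigma$ shifted by one since $a_{\tau^*}=0$ is fixed). The only difference is presentational: you solve a square $\bar r\times\bar r$ Vandermonde system after moving the $\bar d$ downloaded symbols to the right-hand side, whereas the paper phrases the same step as erasure decoding of a length-$(\bar r+\bar d)$ GRS codeword from $\bar d$ known coordinates.
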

\begin{proof}
We firstly select a system of the parity check equations with respect to the values of $t$, i.e.,
\begin{equation}\label{eq8}
\sum_{e=0}^{\bar{n}-1}\sum_{g=0}^{u-1}H_{t,(e,g)}{\bm c}_{(e,g)}^\tau={\bm 0}, ~\forall ~t\in T_{e^*}
\end{equation}
where $T_{e^*}\!=\!\{t\in[0,r-1]\!\mid \!t\!\equiv\! \pi(e^*)~{\rm mod~}u\}$. Since $r=n-k=(\bar{n}-\bar{k})u-u_0=\bar{r}u-u_0$, it obviously has
\begin{equation}\label{eqt}
T_{e^*}=\{\pi(e^*)+iu\mid i\in[0,\bar{r}-1]\}\;.\end{equation}
Denote $\tau^*\!=\!\lfloor\frac{e^*}{u-u_0}\rfloor$ and $A(\tau^*,0)\!=\!\{a\!\in\![0,\alpha\!-\!1]\mid a_{\tau^*}=0\}$.
Then, for all $a\!\in\! A(\tau^*,0)$ we pick the $a$-th rows from the equations in \eqref{eq8} which will be used to enable the repair of single node failures in rack $e^*$.

For simplicity, denote
$\mathcal{A}_\sigma=A(\tau^*,0)\cap\mathcal{L}_\sigma$ for $\sigma\in[m]$. Obviously, $\cup_{\sigma=1}^m\mathcal{A}_\sigma$ forms a partition of $A(\tau^*,0)$.
First consider the $a$-th rows in \eqref{eq8} for all $a\in\mathcal{A}_1$ which
induce the following linear system
\begin{align}
&\sum_{g=0}^{u-1}\lambda^{iu}_{(e^*,g)}\cdot\lambda_{(e^*,g)}^{\pi(e^*)}{ c}_{(e^*,g),a}\!+
\!\sum_{v=1}^{\bar{s}-1}\mu_v^i\!\Big(\!\sum_{g=0}^{u-1}\lambda_{(e^*,g)}^{\pi(e^*)}{ c}_{(e^*,g),a(\tau^*,v)}\!\Big)\notag\\
&+\!\sum_{e\neq e^*}\sum_{g=0}^{u-1}\lambda^{iu}_{(e,g)}\cdot\lambda_{(e,g)}^{\pi(e^*)}{ c}_{(e,g),a}\!=\!0,~~~\forall~ i\!\in\![0,\bar{r}\!-\!1]\;. \label{eq9}
\end{align}

We give some explanations about \eqref{eq9}. By Algorithm \ref{alg1}, for any $a\!\in\!\mathcal{A}_1$ and $t\!\in\! T_{e^*}$ the $a$-th row of $H_{t,(e^*\!,g)}$ has nonzero entries in the diagonal position and $\bar{s}\!-\!1$ non-diagonal positions, which respectively correspond to the first two terms in the left side of \eqref{eq9}. For any $e\!\neq\! e^*$, it has $\big(\lfloor\frac{e}{u-u_0}\rfloor,\pi(e)\big)\neq \big(\tau^*,\pi(e^*)\big)$. Combining with the fact that $a_\tau\!\neq\! 0$ for all $\tau\!\neq\!\tau^*$ due to $a\!\in\!\mathcal{A}_1$, the conditions $a_{\lfloor\frac{e}{u-u_0}\rfloor}=0$ and $t\equiv \pi(e)~{\rm mod~}u$ can not simultaneously hold for all $t\!\in\! T_{e^*}$. Therefore, the $a$-th rows of $H_{t,(e,g)}$'s only have nonzero entries in the diagonal positions which result in the third term in the left side of \eqref{eq9}. Moreover, according to the expression of $T_{e^*}$ in \eqref{eqt}, one can finally derive \eqref{eq9}.

Then for all $e\in[0,\bar{n}-1]$, denote
\begin{equation}\label{eq7}
\tilde{\bm c}_{e}=\sum_{g=0}^{u-1}\lambda_{(e,g)}^{\pi(e^*)}{\bm c}_{(e,g)}=(\tilde{ c}_{e,0},...,\tilde{ c}_{e,\alpha-1})\in F^\alpha\;.
\end{equation}
Obviously, ${\bm c}_{(e^*,g^*)}$ can be computed from $\tilde{\bm c}_{e^*}$ and the intra-rack transmission $\{{\bm c}_{(e^*,g)}\!\mid\! g\!\in\![0, u-1], g\!\neq\! g^*\}$.

Moreover, because $\lambda_{(e,g)}=\xi^e\eta^g$ and $\eta$ has multiplicative order $u$, it has $\lambda_{(e,g)}^{iu}=(\xi^{eu})^i$. Using the notation defined in \eqref{eq7}, the linear system \eqref{eq9} becomes
\begin{equation}\label{eq11}
\sum_{e=0} ^{\bar{n}-1}(\xi^{eu})^i\tilde{c}_{e,a}\!+\!\sum_{v=1}^{\bar{s}-1}\!\mu_v^i\tilde{c}_{e^*\!,a(\tau^*\!,v)}\!=\!0,~\forall i\!\in\![0,\bar{r}\!-\!1]\;.
\end{equation}
By the selection of $\xi$ and $\mu_v$'s, \eqref{eq11} actually defines a $(\bar{r}\!+\!\bar{d},\bar{r})$ GRS codeword $(\tilde{c}_{0,a},...,\tilde{c}_{\bar{n}-1,a},\tilde{c}_{e^*\!,a(\tau^*\!,1)},...,
\tilde{c}_{e^*\!,a(\tau^*\!,\bar{s}-1)})$, so downloading $\{\tilde{c}_{e,a}\mid e\in\mathcal{H}\}$ can recover $\{\tilde{c}_{e^*\!,a},\tilde{c}_{e^*\!,a(\tau^*\!,1)},...,\\\tilde{c}_{e^*\!,a(\tau^*\!,\bar{s}-1)}\}\cup
\{\tilde{c}_{e,a}\mid e\in[0,\bar{n}\!-\!1]\!-\!\mathcal{H}\}$.

Furthermore, we prove
$\{\tilde{c}_{e^*\!,a(\tau^*\!,0)},...,
\tilde{c}_{e^*\!,a(\tau^*\!,\bar{s}-1)}\!\mid \!a\!\in\! \mathcal{A}_{\sigma}\}$ can be recovered from $\{\tilde{c}_{e,b}\!\mid\! b\!\in\!  \cup_{\delta=1}^\sigma\mathcal{A}_{\delta}, e\!\in\!\mathcal{H}\}$ for all $\sigma\in[m]$. This is accomplished by induction on $\sigma$ and the above is the proof for the base case $\sigma\!=\!1$.

Let us see the inductive step.  For any $a\in \mathcal{A}_{\sigma+1}$, we still pick the $a$-th rows from the parity check equations in \eqref{eq8} and obtain a linear system similar to \eqref{eq11} except the left side has the fourth term corresponding to the nonzero non-diagonal entries in the $a$-th rows of $H_{t,(e,g)}$ for the $e\neq e^*$ satisfying $\pi(e)=\pi(e^*)$ and $a_{\lfloor\frac{e}{u-u_0}\rfloor}=0$.
However, from $e\neq e^*$ and $\pi(e)=\pi(e^*)$, it must have $\lfloor\frac{e}{u-u_0}\rfloor\neq\lfloor\frac{e^*}{u-u_0}\rfloor=\tau^*$, thus $a(\lfloor\frac{e}{u-u_0}\rfloor,v)\in \mathcal{A}_{\sigma}$ for $v\in[\bar{s}-1]$. Therefore, by the induction hypothesis the fourth term can be computed from $\{\tilde{c}_{e,b}\mid e\in\mathcal{H}, b\!\in\! \cup_{\delta=1}^\sigma\mathcal{A}_{\delta}\}$. Then similar to \eqref{eq11}, one can recover $\{\tilde{c}_{e^*\!,a},\tilde{c}_{e^*\!,a(\tau^*\!,1)},...,\tilde{c}_{e^*\!,a(\tau^*\!,\bar{s}-1)}\}\bigcup
\{\tilde{c}_{e,a}\mid e\in[0,\bar{n}-1]-\mathcal{H}\}$ by additionally downloading $\{\tilde{c}_{e,a}\mid e\in\mathcal{H}\}$ for all $a\in\mathcal{A}_{\sigma+1}$.

Therefore,  by downloading ${\bm s}_e\!=\!(\tilde{c}_{e,a})_{a\in A(\tau^*,0)}\!\in\! F^{\beta}$ from each helper rack $e\!\in\!\mathcal{H}$ along with the intra-rack communication, the repair is accomplished.
\end{proof}

\begin{remark}\label{Re2}
Although Theorem \ref{thm4} is proved by an inductive process according to a partition of the coordinates (see Remark \ref{re1}), it actually coincides with the sufficient conditions given in Theorem \ref{thm2} for the optimal repair.
\begin{enumerate}
  \item Selection of parity check equations for repair. In Theorem \ref{thm2} the matrix $S_{e^*}$ selects a linear system from (\ref{PCE}) which then induces an MDS array code defined in (\ref{eq50}). In Theorem \ref{thm4} this selection is sequentially accomplished by the restriction to the set $T_{e^*}$ defined in (\ref{eqt}) and then to the rows indexed by $A(\tau^*,0)$. The resultant MDS code is defined in (\ref{eq11}).

      Since $T_e\!=\!T_{\pi(e)}$ for all $e\!\in\![0,\bar{n}\!-\!1]$ and $\pi(e)$ ranges in $[0,u\!-\!u_0\!-\!1]$, $u-u_0$ linear systems are used for repair in our code. By contrast, \cite{Chen} only used the linear system labeled by $T_0$ for repair.
  \item All $u$ nodes in a rack play as a whole in the repair. From (\ref{eq7}) one can see our code $\mathcal{C}$ also follows this rule. Specifically, since in (\ref{eq9}) it has $\lambda_{(e,g)}^{iu}=(\xi^{eu})^i$ for all $g\in[0,u-1]$, $(\xi^{eu})^i$ is like the common divisor $P_e$ drawn out for each rack $e$ in Theorem \ref{thm2}, and the diagonal matrix $\lambda_{(e,g)}^{\pi(e^*)}I_{\alpha}$ corresponds to the matrix $Q_{(e,g)}$ in Theorem \ref{thm2}, where $I_\alpha$ is the $\alpha\times\alpha$ identity matrix. Obviously, $\lambda_{(e,g)}^{\pi(e^*)}I_{\alpha}$ is invertible for all $e\in[0,\bar{n}-1]$.
\end{enumerate}

\end{remark}

\begin{remark}
From \eqref{eq7} and the proof of Theorem \ref{thm4} one can easily see the repair of a node failure needs to access $\alpha/\bar{s}$ symbols from each node in the helper racks, which is the same as the low-access construction in \cite{Chen}.
\end{remark}

\section{Conclusion and Future Work}
In this work, by using the parity-check equations in an more efficient way for repair, we reduce the sub-packetization of existing explicit constructions of MSRR codes from $(\bar{d}-\bar{k}+1)^{\bar{n}}$ to $(\bar{d}-\bar{k}+1)^{\lceil\frac{\bar{n}}{u-u_{0}}\rceil}$, which helps to bridge the gap from the proved lower bound. Further reducing the sub-packetization and proving a lower bound without the optimal-access hypothesis are left as future work. Besides, constructing optimal-access MSRR codes for nontrivial parameters seems to be an even harder problem.


\begin{thebibliography}{100}

\bibitem{Dimakis2011}
A. G. Dimakis, P. B. Godfrey, Y. Wu, M. J. Wainwright, and K. Ramchandran, ``Network coding for distributed storage systems,'' IEEE Trans. Inform. Theory, vol. 56, no. 9, pp. 4539-4551, Sep. 2010.

\bibitem{Kumar2011}
K.~V. Rashmi, N.~B. Shah, and P.~V. Kumar, ``Optimal exact-regenerating codes for distributed storage at the MSR and MBR points via a product-matrix construction,'' IEEE Trans. Inform. Theory, vol. 57, no. 8, pp. 5227-5239, Aug. 2011.

\bibitem{Sasidharan2015}
B.~Sasidharan, G.~K. Agarwal, and P.~V. Kumar, ``A high-rate MSR code with polynomial sub-packetization level,'' IEEE International Symposium on Information Theory, Oct. 2015.

\bibitem{Rawat2016}
A. S. Rawat, O. O. Koyluoglu, and S. Vishwanath, ``Progress on High-rate MSR codes: Enabling Arbitrary Number of Helper nodes,'' Information Theory and Applications Workshop(ITA), Feb. 2016.

\bibitem{Ye2016}
M.~Ye, A.~Barg, ``Explicit constructions of high rate MDS array codes with optimal repair bandwidth,'' IEEE Trans. Inform. Theory, vol. 63, no. 4, pp. 2001-2014, Apr. 2017.
	
\bibitem{SubPBoundSTOC19}
O. Alrabiah, V. Guruswami, ``An exponential lower bound on the sub-packetization of MSR codes,'' STOC, pp. 979-985, 2019.

\bibitem{Ye2016sub-}
M.~Ye, A.~Barg, ``Explicit constructions of optimal-access MDS codes with nearly optimal sub-packetization,'' IEEE Trans. Inform. Theory, vol. 63, no. 10, pp. 6307-6317, Oct. 2017.

\bibitem{Wang}
Z.~Wang, I.~Tamo, J.~Bruck, ``Long MDS Codes for Optimal Repair Bandwidth,'' IEEE International Symposium on Information Theory Proceedings, pp. 1182-1186, July 2012.

\bibitem{Hu}
Y.~Hu, P.~P. C. Lee, and X.~Zhang, ``Double regenerating codes for hierarchical data centers,'' Proc IEEE Int. Sympos. Inform. Theory(ISIT), pp. 245-249, July 2016.

\bibitem{Hou}
H.~Hou, P.~Lee, K.~Shum, and Y.~Hu, ``Rack-aware regenerating codes for data centers,'' IEEE Trans. Inform Theory, vol. 65, no. 8, pp. 4730-4745, 2019.

\bibitem{Chen}
Z.~Chen, A.~Barg, ``Explicit constructions of MSR codes for clustered distributed storage: The rack-aware storage model,'' IEEE Trans. Inform Theory, vol. 66, no. 2, pp. 886-899, Feb. 2020.

\bibitem{Balaji2018}
S. Balaji, P. Vijay Kumar, ``A tight lower bound on the sub-packetization level of optimal-access MSR and MDS codes,'' IEEE International Symposium on Information Theory, pp. 2381-2385, June 2018.

\bibitem{Hou2020}
H.~Hou, P.~Lee, and Y.~Han, ``Minimum Storage Rack-Aware Regenerating Codes with Exact Repair and Small Sub-Packetization,'' IEEE International Symposium on Information Theory, pp. 554-559, June 2020.

\bibitem{Tang}
J. Li, X. Tang, and C. Tian, ``A Generic Transformation for Optimal
Repair Bandwidth and Rebuilding Access in MDS Codes,'' IEEE International Symposium on Information Theory, pp. 1623-1627, June 2017.

\end{thebibliography}
\end{document}